	\documentclass[
		paper=a4,11pt,american,pagesize,%
		abstract=true,headinclude=true,headlines=0,footinclude=true,footlines=-5,twoside=semi,%
		DIV=12,%
	]{scrartcl}
	\def\docclass{koma}
	\def\version{arxiv}

	\def\draftmode{false} 

\usepackage[utf8]{inputenc}
\makeatletter
\usepackage{xifthen}

\newcommand\iflipics[2]{\ifthenelse{\equal{\docclass}{lipics}}{#1}{#2}}
\newcommand\ifkoma[2]{\ifthenelse{\equal{\docclass}{koma}}{#1}{#2}}
\newcommand\ifieee[2]{\ifthenelse{\equal{\docclass}{ieee}}{#1}{#2}}
\newcommand\ifsiam[2]{\ifthenelse{\equal{\docclass}{siam}}{#1}{#2}}
\newcommand\ifsiamsingle[2]{\ifthenelse{\equal{\docclass}{siam-single}}{#1}{#2}}
\newcommand\ifmysiam[2]{\ifthenelse{\equal{\docclass}{my-siam}}{#1}{#2}}
\newcommand\ifacm[2]{\ifthenelse{\equal{\docclass}{acm}}{#1}{#2}}
\newcommand\ifdcc[2]{\ifthenelse{\equal{\docclass}{dcc}}{#1}{#2}}
\newcommand\ifspringerjournal[2]{\ifthenelse{\equal{\docclass}{springer-journal}}{#1}{#2}}
\newcommand\iflncs[2]{\ifthenelse{\equal{\docclass}{lncs}}{#1}{#2}}
\ifthenelse{ \equal{\docclass}{lipics} \OR \equal{\docclass}{koma} \OR \equal{\docclass}{ieee} \OR \equal{\docclass}{siam} \OR \equal{\docclass}{my-siam} \OR \equal{\docclass}{acm} \OR \equal{\docclass}{dcc} \OR \equal{\docclass}{springer-journal} \OR \equal{\docclass}{lncs} }{
	\PackageInfo{paper}{Building paper with docclass = \docclass} 
}{
	\PackageWarning{paper}{docclass = "\docclass", but must be one of "lipics", "koma", "ieee", "siam", "siam-single", "my-siam", "acm", "dcc", "springer-journal", "lncs"}
}

\newcommand\ifmanuscript[2]{\ifthenelse{\equal{\version}{manuscript}}{#1}{#2}}
\newcommand\ifarxiv[2]{\ifthenelse{\equal{\version}{arxiv}}{#1}{#2}}
\newcommand\ifsubmission[2]{\ifthenelse{\equal{\version}{submission}}{#1}{#2}}
\newcommand\ifproceedings[2]{\ifthenelse{\equal{\version}{proceedings}}{#1}{#2}}
\ifthenelse{ 
	\equal{\version}{manuscript} 
	\OR \equal{\version}{arxiv} 
	\OR \equal{\version}{submission} 
	\OR \equal{\version}{proceedings} 
}{
	\PackageInfo{paper}{Building paper version = \version} 
}{
	\PackageWarning{paper}{version = "\version", but must be one of "manuscript", "arxiv", "submission", "proceedings"}
}

\newcommand\ifdraft[2]{\ifthenelse{\equal{\draftmode}{true}}{#1}{#2}}
\ifthenelse{ \equal{\draftmode}{true} \OR \equal{\draftmode}{false} }{
	\PackageInfo{paper}{Building paper with draftmode = \draftmode} 
}{
	\PackageWarning{paper}{draftmode = "\draftmode", but must be "true" or "false"}
}

\usepackage[T1]{fontenc}
\ifsiam{
	\usepackage{lmodern}
	\usepackage{slantsc}
}{}
\ifsiamsingle{
	\usepackage{lmodern}
	\usepackage{slantsc}
}{}
\ifmysiam{
	\usepackage{lmodern}
	\usepackage{slantsc}
}{}
\ifkoma{
	\usepackage{lmodern}
	\usepackage{slantsc}
}{}
\iflipics{
	\usepackage{lmodern}
	\usepackage{slantsc}
}{}
\ifdcc{
	\usepackage{lmodern}
	\usepackage{slantsc}
}{}
\ifspringerjournal{
	\usepackage{lmodern}
	\usepackage{slantsc}
	\usepackage{xcolor}
}{}
\iflncs{
	\usepackage{lmodern}
	\usepackage{slantsc}
	\usepackage{xcolor}
}{}

\usepackage{babel}
\input{ushyphex.tex} 

\usepackage{array,multicol,multirow}
\ifieee{
	\usepackage[cmex10]{amsmath,mathtools}
	\usepackage{amsfonts,amssymb}
}{}
\ifkoma{
	\usepackage{amsmath,amsfonts,amssymb,mathtools}
}{}
\iflipics{
	\usepackage{amsmath,amsfonts,amssymb,mathtools}
}{}
\ifsiam{
	\usepackage{amsmath,amsfonts,amssymb,mathtools}
}{}
\ifsiamsingle{
	\usepackage{amsmath,amsfonts,amssymb,mathtools}
}{}
\ifmysiam{
	\usepackage{amsmath,amsfonts,amssymb,mathtools}
}{}
\ifacm{
	\usepackage{mathtools}
}{}
\ifdcc{
	\usepackage{amsmath,amsfonts,amssymb,mathtools}
}{}
\ifspringerjournal{
	\usepackage{amsmath,amsfonts,amssymb,mathtools}
}{}
\iflncs{
	\usepackage{amsmath,amsfonts,amssymb,mathtools}
}{}

\usepackage{mleftright}\mleftright 
\usepackage{relsize,xspace,booktabs,adjustbox,needspace,pbox,relsize,makecell}
\ifieee{
		
		\usepackage{enumitem}
}{
	\ifspringerjournal{
		\usepackage{enumitem}
		\setlist{topsep=\medskipamount}
	}{
		\usepackage{enumitem}
	}
}
\usepackage{graphicx}

\ifacm{}{
	\usepackage{colonequals}
}

\usepackage{wref}

\ifsiam{
	\usepackage{ltexpprt}
}{}
\ifsiamsingle{
	\usepackage{ltexpprt}
}{}

\newdimen\makeboxdimen

\ifthenelse{\equal{\docclass}{koma} \OR \equal{\docclass}{my-siam}}{
	\setlength\parindent{1.5em}
	\usepackage[headsepline]{scrlayer-scrpage}
	\pagestyle{scrheadings}
	\clearscrheadfoot
	\AtBeginDocument{%
		\automark[section]{}%
	}
	\ohead{\pagemark}
	\rehead{\mytitle}
	\lohead{\mytitle}
	\addtokomafont{caption}{\sffamily\small}
	\addtokomafont{captionlabel}{\sffamily\textbf}
	\setcapmargin{2em}
}{}
\ifmysiam{
	\setcapmargin{1em}
	\setcapindent{0em}
}{}
\ifmysiam{
	\setlength\parskip{0pt}
	\RedeclareSectionCommand[
		beforeskip=-1.25\baselineskip,
		afterskip=0.75\baselineskip,
	]{section}
	\RedeclareSectionCommand[
		beforeskip=-1\baselineskip,
		afterskip=-1.5em,
	]{subsection}
	\RedeclareSectionCommand[
		beforeskip=-1\baselineskip,
		afterskip=-1.5em,
	]{subsubsection}
	\RedeclareSectionCommand[
		beforeskip=-.25\baselineskip,
		indent=1.5em,
		afterskip=-1em,
	]{paragraph}
}{}
\ifdcc{
	\ifproceedings{}{
		\pagestyle{plain}
		\setlength{\footskip}{5ex}
	}
}{}

\AtBeginDocument{%
	\let\mytitle\@title%
}

\newcommand\shorttitle[1]{%
	\ifacm{%
		\FAIL
	}{}
	\ifspringerjournal{%
		\FAIL
	}{}
	\iflipics{%
		\titlerunning{#1}%
	}{}%
	\iflncs{%
		\titlerunning{#1}%
	}{}%
	\AtBeginDocument{%
		\def\mytitle{#1}%
	}%
}

\ifsiam{
	\usepackage[bibtex-alpha]{url-doi-arxiv}
}{}
\ifsiamsingle{
	\usepackage[bibtex-alpha]{url-doi-arxiv}
}{}
\ifmysiam{
	\usepackage[bibtex-alpha]{url-doi-arxiv}
}{}
\ifkoma{
	\usepackage[bibtex-alpha]{url-doi-arxiv}
}{}
\iflipics{
	\usepackage[bibtex]{url-doi-arxiv}
}{}
\ifdcc{
	\usepackage[bibtex-alpha]{url-doi-arxiv}
}{}
\ifspringerjournal{
	\usepackage[bibtex-alpha]{url-doi-arxiv}
}{}
\iflncs{
	\usepackage[bibtex-alpha]{url-doi-arxiv}
}{}

\let\oldthebibliography\thebibliography
\renewcommand\thebibliography[1]{%
	\oldthebibliography{#1}%
	\pdfbookmark[1]{References}{}%
}

\usepackage{lscape} 

\ifkoma{
	\usepackage{float}
	\floatstyle{plain}
	\usepackage{newfloat}
	\DeclareFloatingEnvironment[%
			name=Algorithm,%
			placement=thb,%
		]{algorithm}
}{}
\iflipics{
	\usepackage{newfloat}
	\DeclareFloatingEnvironment[%
			name=Algorithm,%
			placement=thb,%
		]{algorithm}
}{}
\ifmysiam{
	\usepackage{newfloat}
	\DeclareFloatingEnvironment[%
			name=Algorithm,%
			placement=thb,%
		]{algorithm}
}

\ifdcc{
	\usepackage{float}
	\usepackage[font={small},labelfont=bf]{caption}
}{}
\ifmysiam{
	\setcounter{topnumber}{3}
	\setcounter{bottomnumber}{3}
	\setcounter{totalnumber}{3}     
	\setcounter{dbltopnumber}{3}    

}{}
\ifsiam{

	\setcounter{topnumber}{3}
	\setcounter{bottomnumber}{3}
	\setcounter{totalnumber}{3}     
	\setcounter{dbltopnumber}{3}    

}{}
\ifsiamsingle{

	\setcounter{topnumber}{3}
	\setcounter{bottomnumber}{3}
	\setcounter{totalnumber}{3}     

}{}

\usepackage{dcolumn}

\usepackage{wclrscode}

\usepackage{textcomp} 
\usepackage{listings}

\usepackage{tikz}

\usetikzlibrary{positioning,arrows.meta,fit}
\usetikzlibrary{backgrounds,calc,trees,graphs}
\usetikzlibrary{shapes.geometric,shapes.misc}
\usetikzlibrary{decorations,decorations.pathreplacing}
\usetikzlibrary{scopes}

\pgfdeclarelayer{background}
\pgfsetlayers{background,main}

\usetikzlibrary{external}
\tikzsetexternalprefix{pics/externalized/}
\tikzset{
	external/system call={%
		lualatex \tikzexternalcheckshellescape -halt-on-error %
			-interaction=batchmode -jobname "\image" "\texsource"%
	},
}
\tikzset{external/export=false} 

\iflipics{
	\newtheorem{fact}[theorem]{Fact}

	\newenvironment{proofof}[1]{%
		\begin{proof}[{{Proof of #1{}}}]%
	}{%
		\end{proof}%
	}
}{}
\ifacm{
	\AtEndPreamble{
		\theoremstyle{acmdefinition}
		\newtheorem{remark}[theorem]{Remark}
		\newtheorem{fact}[theorem]{Fact}
	}
	
	\newenvironment{proofof}[1]{%
		\begin{proof}[{{Proof of #1{}}}]%
	}{%
		\end{proof}%
	}
}{}
\ifsiam{
	\newtheorem{remark}{Remark}
	\newenvironment{proofof}[1]{%
		\begin{proof}[{{#1{}}}]%
	}{%
		\end{proof}%
	}
}{}
\ifsiamsingle{
	\newtheorem{remark}{Remark}
	\newenvironment{proofof}[1]{%
		\begin{proof}[{{#1{}}}]%
	}{%
		\end{proof}%
	}
}{}
\iflncs{
	\spnewtheorem{fact}[theorem]{Fact}{\itshape}{}
	
	\let\orig@endproof\endproof
	\def\endproof{\qed\orig@endproof}
	\newenvironment{proofof}[1]{%
		\begin{proof}[{{#1{}}}]%
	}{%
		\end{proof}%
	}
}{}
\ifthenelse{%
		\equal{\docclass}{lipics} \OR \equal{\docclass}{siam} \OR 
		\equal{\docclass}{siam-single} \OR \equal{\docclass}{acm} \OR
		\equal{\docclass}{lncs}%
}{}{
	\usepackage[amsmath,hyperref,thmmarks]{ntheorem}
	
	\theoremheaderfont{\sffamily\upshape\bfseries}
	\theorembodyfont{\slshape}
	\theoremseparator{:}
	\newtheoremstyle{proofstyle}%
	  {\item[\theorem@headerfont\hskip\labelsep ##1\theorem@separator]}%
	  {\item[\theorem@headerfont\hskip\labelsep ##3\theorem@separator]}
	
	\theorempreskip{\topsep} 

	\theoremsymbol{\adjustbox{scale=.8}{$\triangleleft\mkern-1mu$}}
	
	\newtheorem{theorem}{Theorem}[section]
	
	\theoremstyle{plain}
	\theorempreskip{\topsep}

	\newtheorem{lemma}[theorem]{Lemma}

	\theoremstyle{plain}
	\theorembodyfont{\upshape}

	\theoremsymbol{\raisebox{-.25ex}{$\Box$}}
	\qedsymbol{\raisebox{-.25ex}{$\Box$}}
	
	\theoremstyle{proofstyle}
	\theoremheaderfont{\sffamily\slshape}
	\newtheorem{proof}{Proof}

}

\iflipics{
		\newenvironment{thmenumerate}[2][]{%
			\begin{enumerate}[
				label={\textsf{\textbf{\color{darkgray}{\makebox[\widthof{(a)}][c]{\textup{(\alph*)}}}}}},
				ref={\ref{#2}\kern.1em--\kern.1em(\alph*)},
				itemsep=0pt,
				topsep=.5ex,
				leftmargin=1.75em,
				#1
			]%
		}{%
			\end{enumerate}%
		}
}{
	\ifspringerjournal{
		\newenvironment{thmenumerate}[2][]{%
			\begin{enumerate}[
				label={\makebox[\widthof{(a)}][c]{\textup{(\alph*)}}},
				ref={\ref{#2}\kern.1em--\kern.1em(\alph*)},
				itemsep=0pt,
				topsep=\smallskipamount,
				leftmargin=1.75em,
				#1
			]%
		}{%
			\end{enumerate}%
		}
	}{
		
	}
}

\newcommand*\ie{\mbox{i.\hspace{.2ex}e.}}
\newcommand*\eg{\mbox{e.\hspace{.2ex}g.}}

\newcommand\N{\mathbb N}

\usepackage{fixmath}

\newcommand{\ESymbol}{\mathbb{E}}

\newcommand{\ProbSymbol}{\ensuremath{\mathbb{P}}}

\DeclarePairedDelimiterXPP\Prob[1]{\ProbSymbol}[]{}{%
	#1%
}
\DeclarePairedDelimiterXPP\E[1]{\ESymbol}[]{}{%
	#1%
}
\DeclarePairedDelimiterXPP\Eover[2]{\ESymbol_{#1}}[]{}{%
	#2%
}
\DeclarePairedDelimiterXPP\ProbIn[2]{\ProbSymbol_{#1}}[]{}{%
	#2%
}
\providecommand{\Prob}{} 
\providecommand{\ProbIn}{} 
\providecommand{\E}{} 
\providecommand{\Eover}{} 

\newcommand{\surroundedmath}[3]{
	\mathchoice{
		#1{#2{#3}#2}%
	}{
		#1{#3}%
	}{
		#1{#3}%
	}{
		#1{#3}%
	}%
}

\newcommand\wrel[1]{\surroundedmath{\mathrel}{\;}{#1}}
\newcommand\wwrel[1]{\surroundedmath{\mathrel}{\;\;}{#1}}
\newcommand\bin[1]{\surroundedmath{\mathbin}{\:}{#1}}

\newcommand{\relwithtext}[3][c]{%
	\mathrel{\underset{\mathclap{\makebox[\widthof{$=$}][#1]{\scriptsize#2}}}{#3}}%
}

\iflipics{}{
	\makeatletter
	\let\oldalign\align
	\let\endoldalign\endalign
	\renewenvironment{align}{%
		\begingroup%
		\let\oldhalign\halign
		\def\halign{%
			\let\oldbreak\\%
			\def\nonnumberbreak{\nonumber\oldbreak*}%
			\def\\{%
				\@ifstar{\nonnumberbreak}{\oldbreak}%
			}%
			\oldhalign%
		}
		\oldalign%
	}{%
		\endoldalign%
		\endgroup%
	}
}
\newcommand*\numberthis[1][]{\stepcounter{equation}\tag{\theequation}}

\allowdisplaybreaks[3]

\newcommand\splitaftercomma[1]{%
  \begingroup
  \begingroup\lccode`~=`, \lowercase{\endgroup
    \edef~{\mathchar\the\mathcode`, \penalty0 \noexpand\hspace{0pt plus .25em}}%
  }\mathcode`,="8000 #1%
  \endgroup
}

\def\mydots{\xleaders\hbox to.5em{\hfill.\hfill}\hfill}
\newlength\tmpLenNotations

\ifdraft{%
	\iflipics{
		\usepackage{lineno} 
	}{
		\usepackage[switch]{lineno} 
	}
	\linenumbers
	\overfullrule=6mm
	
	\usepackage[color,notref,notcite]{showkeys}
	\definecolor{refkey}{gray}{.99}
	\colorlet{labelkey}{green!60!black!60}
	
	\usepackage[inline,nolabel]{showlabels}

	\showlabels{cite}
	\showlabels{citealt}
	\showlabels{citealp}
	\showlabels{citet}
	\showlabels{Citet}
	\showlabels{citep}
	\showlabels{citeauthor}
	\showlabels{Citeauthor}
	\showlabels{citefullauthor}
	\showlabels{citeyear}
	\showlabels{citeyearpar}
	\showlabels{wref}
	\showlabels{wpref}
	\showlabels{wtpref}
	\showlabels{wildref}
	\showlabels{wildpageref}
	\showlabels{wildtpageref}
}{}

\iflipics{
	\ifmanuscript{\hideLIPIcs}{}
	\ifarxiv{\hideLIPIcs}{}
	\ifsubmission{}{\nolinenumbers}
}{}

\ifdraft{}{%
	\usepackage{microtype}
}

\hypersetup{
	final,
	unicode=true, 
	bookmarks=true,
	bookmarksnumbered=true,
	bookmarksdepth=2,
	bookmarksopen=true,
	breaklinks=true,
	hidelinks,
}

\newsavebox\tmpbox

\ifdcc{
	\renewcommand\paragraph{\@startsection{paragraph}{4}{\parindent}
	                                      {\smallskipamount}
	                                      {-1em}%
	                                      {\normalfont\normalsize\bfseries}}
}{}
\iflipics{
	\let\oldparagraph\paragraph
	\renewcommand\paragraph[1]{%
		\oldparagraph*{#1}
	}
}{
	\let\oldparagraph\paragraph
	\renewcommand\paragraph[1]{%
		\oldparagraph{#1.}
	}
}

\ifmysiam{
	\let\oldsubsection\subsection
	\renewcommand\subsection[1]{%
		\oldsubsection{#1.}%
	}
	\let\oldsubsubsection\subsubsection
	\renewcommand\subsubsection[1]{%
		\oldsubsubsection{#1.}%
	}
}{}
\ifsiam{
	\let\oldsubsection\subsection
	\renewcommand\subsection[1]{%
		\oldsubsection{#1.}%
	}
	\let\oldsubsubsection\subsubsection
	\renewcommand\subsubsection[1]{%
		\oldsubsubsection{#1.}%
	}
}{}
\ifsiamsingle{
	\let\oldsubsection\subsection
	\renewcommand\subsection[1]{%
		\oldsubsection{#1.}%
	}
	\let\oldsubsubsection\subsubsection
	\renewcommand\subsubsection[1]{%
		\oldsubsubsection{#1.}%
	}
}{}

\let\epsilon\varepsilon

\def\myacknowledgements{}
\ifkoma{
	
}{}
\ifieee{
	
}{}
\ifsiam{
	
}{}
\ifsiamsingle{
	
}{}
\ifmysiam{
	
}{}
\ifdcc{
	
}{}
\ifacm{
	
}{}
\ifspringerjournal{
	
}{}
\iflncs{
	
}{}

\setlist[description]{font=\boldmath}

\makeatother

\usepackage{hyperref}

\ifacm{
		\title[Short Title]{My Long Paper Title}
}{}
\ifspringerjournal{
	\title[Short Title]{My Long Paper Title}
}{
	\title{~\\[-3\baselineskip]\boldmath Top-Down Mergesort with Sorted Check Has Mergecost $\le (\mathcal H+3)n$}
	\shorttitle{Top-Down Mergesort with Sorted Check Has Mergecost $\le (\mathcal H+3)n$} 
}

\iflipics{
	\author{Sebastian Wild}{University of Marburg, Germany}{wild@informatik.uni-marburg.de}{https://orcid.org/0000-0002-6061-9177}{}
	
	\authorrunning{S. Wild}
	\Copyright{Sebastian Wild}
	
		\ccsdesc[500]{Theory of computation~Data structures design and analysis}%

	\keywords{}
}{}

\ifacm{
	\author{Sebastian Wild}
	\orcid{0000-0002-6061-9177} 
	\affiliation{%
		\institution{University of Liverpool}
		\streetaddress{Ashton Building, Ashton Street}
		\city{Liverpool}
		\country{UK}
		\postcode{L69 3BX}
		\position{Lecturer}
		\department{Department of Computer Science}
	}
	\email{wild@liverpool.ac.uk}
	
	\setcopyright{rightsretained} 
	\copyrightyear{2022}
	\acmYear{2022}
	\acmDOI{XXXXXXX.XXXXXXX}
	
	\acmJournal{JACM}

	\keywords{}

	\ccsdesc[500]{Theory of computation~Data structures design and analysis}
	
}{}
\ifspringerjournal{
	
	\author*[1]{\fnm{Sebastian} \sur{Wild}}\email{wild@liverpool.ac.uk}
	
	\affil*[1]{\orgdiv{Department of Computer Science}, \orgname{University of Liverpool}, \orgaddress{\street{Street}, \city{Liverpool}, \postcode{L69 3BX}, \country{UK}}}
}{}
\iflncs{
	\author{%
		Sebasitan Wild%
		\inst{1}%
		\orcidID{0000-1111-2222-3333}%
	}
	\authorrunning{S. Wild et al.}
	\institute{%
		University of Marburg, Germany
		\email{wild@informatik.uni-marburg.de}
	}
	
}{}
\ifthenelse{\NOT \equal{\docclass}{lipics} \AND \NOT \equal{\docclass}{acm} \AND \NOT \equal{\docclass}{springer-journal} \AND \NOT \equal{\docclass}{lncs} }{
	\newcommand\email[1]{\texttt{#1}}
	\author{%
		Sebastian Wild%
			\footnote{University of Marburg, Germany, 
			\email{wild\,@\,informatik.uni-marburg.de}}
	}
	
	\date{\small\today}
}{}

\ifsiam{
	\ifsubmission{\date{}}{}
	
	\fancyfoot[R]{\scriptsize{Copyright \textcopyright\ 20XX\\
	Copyright for this paper is retained by authors}}
	
	\ifsubmission{}{\setcounter{page}{1}}
	\let\oldabstract\abstract\let\oldendabstract\endabstract
	\renewenvironment{abstract}{%
		\noindent\begin{minipage}{\linewidth}%
		\small\setlength{\baselineskip}{9pt}\oldabstract%
	}{%
		\oldendabstract\end{minipage}%
	}
}{}

\begin{document}

\ifacm{}{\maketitle} 

\vspace*{-6ex}

\begin{abstract}
We consider standard top-down recursive Mergesort, 
where we do a single comparison before calling merge to check
if the two recursively sorted subproblems happen to already be correctly ordered.
(If so, we can skip the merging step).
We show for any input $A[0..n)$ of elements consisting of
$r$ \emph{runs} (maximal increasing contiguous subranges in $A$) of respective 
lengths $L_1,\ldots, L_r$,
the mergecost $M$ (the sum of output sizes of all merges) satisfies
$M \le (\mathcal H+3)n$ for 
$\mathcal H = \sum_{i=1}^r (L_i / n) \log_2(n/L_i)$ the runlength entropy.
\end{abstract}

\ifacm{%
	\maketitle%
}{}


\section{Introduction}

\newcommand\overlaps{\mathrel{\raisebox{-.15ex}{\ensuremath{\square}\kern-.6em}\raisebox{.05ex}{\ensuremath{\square}}}}
\newcommand{\Ceil}[1]{\lceil #1 \rceil}
\newcommand{\Floor}[1]{\lfloor #1 \rfloor}

Run-adaptive sorting allows substantial speedups on partially sorted data,
and is widely used in standard library implementations, \eg, in Timsort~\cite{Peters2002} and Powersort~\cite{MunroWild2018}.
A simple counting argument shows that the mergecost of any comparison-based sorting algorithm
is at least $\mathcal Hn$ in the worst case over all inputs with run lengths $L_1,\ldots,L_r$,
and methods such as Peeksort~\cite{MunroWild2018}, Powersort~\cite{MunroWild2018}, and Length-Adaptive Shiversort~\cite{Juge2024} guarantee a mergecost $M \le (\mathcal H+2)n$ 
(with the typical case being closer to the lower bound).

\medskip

\begin{figure}[b]
	\centering\includegraphics[width=\linewidth]{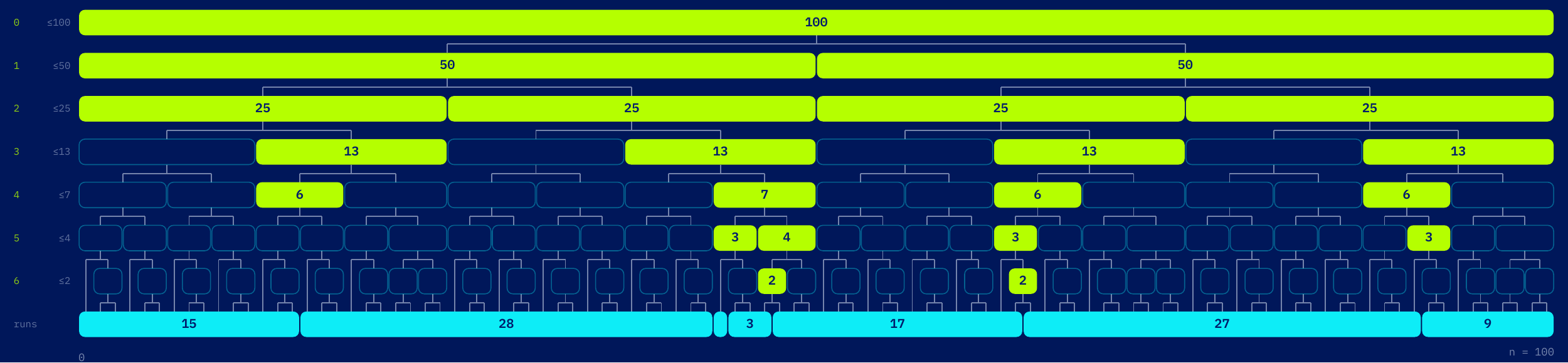}
	\vspace*{-2ex}
	\caption{%
		The recursion tree $T$ of Mergesort on an input of size $n=100$ and run lengths $(L_1,\ldots, L_7) = (15, 28, 1, 3, 17, 27, 9)$.
		Nodes $v\in T$ that do not contain a run boundary are skipped (faded); only filled nodes contribute (their size) to the mergecost $M = 394$.
	}
	\label{fig:recursion}
\end{figure}

\noindent\begin{minipage}[t]{0.52\linewidth}
The above methods explicitly find runs $R_1,\ldots,R_r$ in the input and 
then decide upon a clever order to merge them.
In this note, we show the perhaps-folklore, but unpublished result that the very simple algorithm to the right achieves an only slightly weaker bound.

\medskip

We start by noting that $M$ can be written as
\vspace{-.5ex}
\begin{equation}
\label{eq:M}
		M
	\wrel=
		\sum_{i=1}^r M(R_i),
		\quad M(R_i) \wrel= \sum_{\mathclap{\substack{v\in T:\\ R_i\subsetneq v \,\vee\, v \overlaps R_i}}}  |v\cap R_i|,
\end{equation}
\end{minipage}\hfill
\begin{minipage}[t]{.43\textwidth}
	\begin{lstlisting}[mathescape,gobble=8,basicstyle=\rmfamily]
		procedure Mergesort($A[l..r)$):
			$n \gets r-l$
			if $n \le 1$ return
			$m\gets l + \bigl\lfloor\frac{n}{2}\bigr\rfloor$
			Mergesort($A[l..m)$)
			Mergesort($A[m..r)$)
			if $A[m-1] > A[m]$
				Merge($A[l..m)$, $A[m..r)$, $\mathit{buf}$)
				copy $\mathit{buf}$ to $A[l..r)$
	\end{lstlisting}
\end{minipage}
\medskip

\noindent
where $T$ is the recursion tree of standard Mergesort, $v$ and $R_i$ are identified with their (integer) intervals of indices
contained in their range, and the (nonstandard) notation
$A\overlaps B$ ($A$ ``properly intersects'' $B$) means that $A\cap B \ne \emptyset \wedge A\nsubseteq B \wedge A\nsupseteq B$, neither contained nor disjoint.

\weqref{eq:M} gives a charging scheme for the mergecost: 
each run pays for \emph{its contribution} to the overall mergecost in the merges it participates in.
\wref{fig:charging} gives an example.

\begin{figure}
	\centering\includegraphics[width=\linewidth]{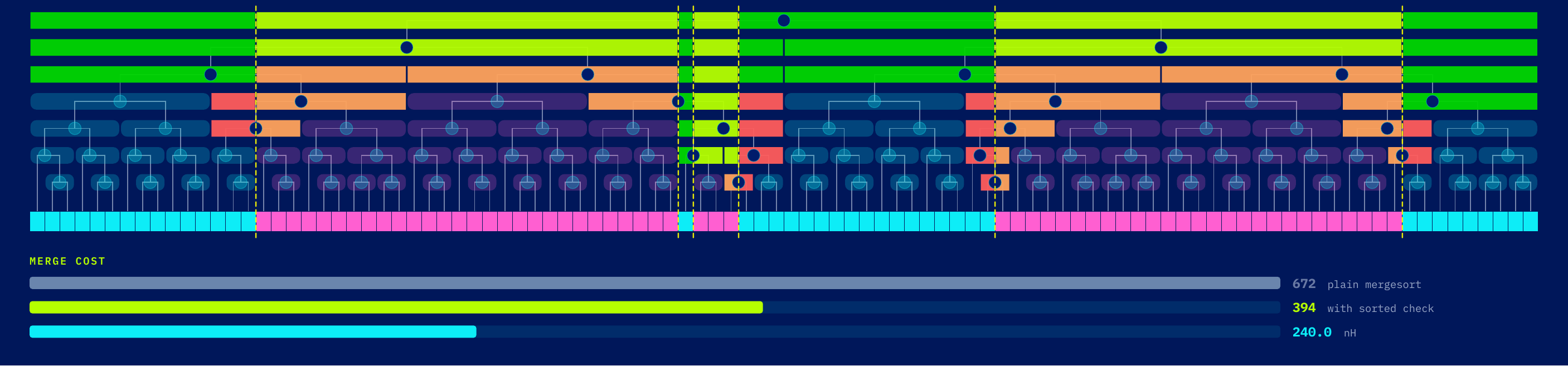}
	\vspace*{-2ex}
	\caption{%
		The recursion tree from \wref{fig:recursion} showing the charging scheme of the analysis.
		Each run pays the colored parts in its vertical range.
	}
	\label{fig:charging}
\end{figure}

A simple calculation shows that it suffices to prove that $M(R_i) \le L_i(\lg(n/L_i) + 3)$, so in the following, we consider $M(R)$ for a single run of length $L$.

\section{Power of Two Case}
Without rounding, \ie, for $n$ a power of 2, the calculation is cleanest.
We account for costs level by level.
At depth $d$, a recursive subproblem has size $n/2^d$. 
We never need to charge $R$ more than $L=|R|$. 
At the same time, there can be at most 2 nodes $v\in T$ (recursive calls) at depth $d$ that contribute to $M(R)$ (cf.\ \wref{fig:charging}), so charging $R$ at most $2\cdot n/2^d$ for depth $d$ also suffices, and we may take the minimum of the two.
Hence%
\footnote{%
	The same calculation appears, \eg, in~\cite[\S3.3]{BrodalWild2023}.
}
\begin{align}
		M(R) 
	&\wwrel\le 
		\sum_{d\ge0} \min\{ L,\, 2\cdot n/2^d \}
	\wwrel{\relwithtext{$(*)$}{\le}}
		L\cdot\lg\frac{n}{L} + 3L
\label{eq:M(R)}
\end{align}
To see why inequality $(*)$ is true, set $G=\lfloor \lg(2n/L)\rfloor$
and $\theta = \{\lg(2n/L)\} = \lg(2n/L) - G \in [0,1)$.
Then $2^\theta = 2n/(L 2^G)$ and $2n/2^G = L  2^\theta$.
Now note that the minimum in $M(R)$ is $L$ iff $d\le G$; otherwise, for $d = G + k$, $k\ge 1$, 
the minimum is 
$2n/2^{d} = (2n/2^{G})\cdot 2^{-k} = L 2^\theta \cdot 2^{-k}$.
Thus
\begin{align*}
		M(R)
	&\wwrel\le
		(G+1)\cdot L \bin+ L\cdot 2^\theta\sum_{k\ge 1} 2^{-k}
	\wwrel=
		L\cdot \bigl(G+1 + 2^\theta\bigr)
\\ &\wwrel=
		L\cdot \left(\lg\frac{n}L  \bin+ g(\theta)\right)
		\qquad\text{with}\quad 
		g(\theta) = 2 - \theta + 2^\theta \in [2.9,3].
\end{align*}

\section{\boldmath General $n$}

For general $n$, some recursive call sizes must be rounded up, so we have to make the analysis
a little tighter. We need two new, but simple ingredients; the first relies on a ceiling fact.

\begin{lemma}[Double Ceiling]
\label{lem:ceil}
For positive integers $n,p,q$, we have:
\(
		\Ceil{\Ceil{n/p}/q} \wwrel= \Ceil{n/(pq)} 
\).
\end{lemma}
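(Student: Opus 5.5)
The plan is to characterize the ceiling through an inequality with integers and compare both sides. For a real $x$ and integer $k$, we have $\Ceil{x} \le k$ iff $x \le k$. The key fact is that for a positive integer $p$ and an integer $k$, $\Ceil{n/p} \le k$ iff $n/p \le k$. The forward direction holds because $n/p \le \Ceil{n/p}$. The backward direction holds because $k$ is an integer upper bound for $n/p$, and $\Ceil{n/p}$ is the least such bound. Integrality of $n$ is not even needed for this step; only integrality of $k$ matters.

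Now fix an arbitrary integer $k$ and chain equivalences. First, $\Ceil{\Ceil{n/p}/q} \le k$ iff $\Ceil{n/p}/q \le k$, applying the fact to the real number $\Ceil{n/p}/q$. Multiplying by $q>0$, this holds iff $\Ceil{n/p} \le qk$. Since $qk$ is an integer, the fact gives that this holds iff $n/p \le qk$. Dividing by $q$, this holds iff $n/(pq) \le k$, and by the fact once more, iff $\Ceil{n/(pq)} \le k$.

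So the two integers $\Ceil{\Ceil{n/p}/q}$ and $\Ceil{n/(pq)}$ have the same set of integer upper bounds. Taking $k$ equal to each of them in turn gives $\le$ in both directions, hence equality.

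The only step needing care is $\Ceil{n/p} \le qk \iff n/p \le qk$, which relies on $qk$ being an integer. That is exactly where positivity and integrality of $q$ enter, since positivity is needed to multiply the inequality through without flipping it. The argument is otherwise routine.

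As a fallback, one can argue via division with remainder. Write $n = pq\,a + b$ with $0 \le b < pq$. Then $\Ceil{n/p} = qa + \Ceil{b/p}$ with $0 \le \Ceil{b/p} \le q$, and it remains to check that $\Ceil{\Ceil{b/p}/q}$ equals $[b>0]$, which is $\Ceil{b/(pq)}$.
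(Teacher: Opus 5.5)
Your proof is correct and is essentially the paper's argument. The paper also characterizes $\lceil n/m\rceil$ as the least integer $k$ with $km\ge n$, and uses the same key equivalence $kq\ge\lceil n/p\rceil \iff kq\ge n/p \iff kpq\ge n$, which holds because $kq$ is an integer. Your chain of equivalences over integer upper bounds is this argument phrased slightly differently, and your division-with-remainder fallback is also valid but unnecessary.
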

\begin{proof}
$\Ceil{n/m}=\min\{k\in\mathbb Z: km\ge n\}$, and for integers $k$ and $n,p,q\in\N$ one has
\\
$kq\ge\Ceil{n/p}\iff kq\ge n/p\iff kpq\ge n$. 
%
\end{proof}
The first new ingredient is a tight bound on the subproblem size for general $n$.
\begin{lemma}[Node Sizes]
\label{lem:sizes}
Every node $v$ at depth $d$ satisfies $|v|\le\Ceil{n/2^d}$.
\end{lemma}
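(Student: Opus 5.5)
I will prove the claim by induction on the depth $d$, using Lemma~\ref{lem:ceil} for the inductive step. For the base case $d=0$, the only node is the root, which has size $n=\Ceil{n/2^0}$.

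For the inductive step, let $v$ be a node at depth $d+1\ge1$ and let $u$ be its parent at depth $d$. By the induction hypothesis, $|u|\le\Ceil{n/2^d}$. The procedure splits a call on $|u|$ elements into children of sizes $\Floor{|u|/2}$ and $|u|-\Floor{|u|/2}=\Ceil{|u|/2}$. Hence
\[
	|v| \wrel\le \Ceil{|u|/2}.
\]
The map $x\mapsto\Ceil{x/2}$ is monotone, so
\[
	|v| \wrel\le \Ceil{\Ceil{n/2^d}/2}.
\]
Applying Lemma~\ref{lem:ceil} with $p=2^d$ and $q=2$ turns the right-hand side into $\Ceil{n/2^{d+1}}$, which completes the induction.

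There is no real obstacle here. The only points to state explicitly are that the larger child has size exactly $\Ceil{|u|/2}$, and that the ceiling is monotone, so the upper bound on $|u|$ carries over to $|v|$. Nodes of size $\le1$ have no children, so they never appear as a parent and need no special treatment.
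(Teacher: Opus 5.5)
Your proof is correct and follows essentially the same route as the paper: induction on $d$, bounding a child of a node of size at most $\Ceil{n/2^d}$ by $\Ceil{\Ceil{n/2^d}/2}$ and collapsing this with the Double Ceiling lemma for $p=2^d$, $q=2$. You also state explicitly two points the paper leaves implicit, namely that the larger child has size exactly $\Ceil{|u|/2}$ and that $x\mapsto\Ceil{x/2}$ is monotone.
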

\begin{proof}
Induction on $d$: the root ($d=0$) has size $n=\Ceil{n/2^0}$.
The child of a node of size $\le\Ceil{n/2^{d}}$ has size at most
$\Ceil{\Ceil{n/2^{d}}/2}=\Ceil{n/2^{d+1}}$, using \wref{lem:ceil}. 
\end{proof}
The second ingredient is that when $R\subsetneq v$ or $v\overlaps R$, $v$ must contain at least one element \emph{not} in $R$,
so 
\[
	|v \cap R| 
	\wwrel\le 
	|v|-1 
	\wwrel\le 
	\Ceil{n/2^d} - 1
	\wwrel\le 
	n/2^d.
\]
We can thus still bound $M(R)$ by $\sum_{d\ge 0} \min\{L, 2\cdot n/2^d\}$ and obtain the bound \weqref{eq:M(R)} by the same calculation, which implies $M\le (\mathcal H+3)n$.

\medskip\noindent
Note that the bound $M\le (\mathcal H+3)n$ is asymptotically tight; on input $(L_1,L_2,L_3) = (1,n-2,1)$
Mergesort has $M=3n-O(1)$, whereas $\mathcal H = O(\lg n / n)$.

	\myacknowledgements

\bibliography{references}


\ifdraft{
	\clearpage
	\part*{Notes-to-self}
	\printnotestoself
}{}

\end{document}